\newtheorem{theorem}{Theorem}
\newtheorem{lemma}[theorem]{Lemma}
\newcommand{\mcal}[1]{\mathcal{#1}}
\newcommand{\mrm}[1]{\mathrm{#1}}
\newcommand{\mbf}[1]{\mathbf{#1}}
\newcommand{\ot}{\leftarrow}
\newcommand{\sem}[1]{[\![#1]\!]}
\newcommand{\lrangle}[1]{\langle #1 \rangle}
\newcommand{\suzuki}[1]{{\color[rgb]{0.3,0.6,0} #1}}
\newcommand{\kawahara}[1]{{\color{blue} #1}}
\newcommand{\marking}[1]{{\color{red} #1}}
\title{Enumerating All Subgraphs without Forbidden Induced Subgraphs via Multivalued Decision Diagrams}
\date{}
\author[1]{Jun Kawahara}
\author[2]{Toshiki Saitoh}
\author[3]{Hirofumi Suzuki}
\author[4]{Ryo Yoshinaka}
\affil[1]{Nara Institute of Science and Technology}
\affil[2]{Kyushu Institute of Technology}
\affil[3]{Hokkaido University}
\affil[4]{Tohoku University}
\begin{document}
\maketitle

\begin{abstract}
We propose a general method performed over multivalued decision diagrams that enumerates all subgraphs of an input graph that are characterized by input forbidden induced subgraphs.
Our method combines elaborations of classical set operations and the developing construction technique, called the frontier based search, for multivalued decision diagrams.
Using the algorithm, we enumerated all the chordal graphs of size at most 10 on multivalued decision diagrams.
\end{abstract}

\section{Introduction}
Enumeration is a fundamental topic in computer science.
Especially, {\it subgraph enumeration problem} is a well-studied topic.
Given a graph and constraints, the problem is to output all the subgraphs satisfying the constraints in the graph.
Several well-known techniques for enumeration have been proposed \cite{tarjan:1975, avis:1996} and applied to several graph classes.
For example, \cite{kiyomi:2006} enumerates all subgraphs belonging to the class of {\it chordal graphs} based on reverse search~\cite{avis:1996}.
These traditional algorithms enumerate subgraphs one by one explicitly and take time depending on the number of output subgraphs.
Unfortunately, the number of output subgraphs is exponentially huge in the size of the input graph.

On the other hand, to approach the subgraph enumeration problem by implicit enumeration, techniques constructing a compressed representation such as {\it Zero-suppressed Binary Decision Diagram} (ZDD) \cite{minato:1993} are well-studied.
Computation time of such techniques does not depend on the number of subgraphs but the size of the constructed ZDDs.
A classical technique such as the one proposed in~\cite{coudert:1997} is to use {\it apply operation} and {\it family algebra} that are useful function of ZDD.
Moreover, a novel algorithm named {\it frontier-based search} (FBS)  \cite{kawahara:2017} has been developed recently.
FBS has been applied for enumerating various classes of subgraphs such as paths, cycles, forests, partitions, and so on.
However, graph classes handled by ZDD based techniques are limited to rather simple ones only.

In this paper, we propose a general technique that enumerates subgraphs belonging to graph classes characterized by {\it forbidden induced subgraphs}.
Several graph classes such as chordal, interval, split, and threshold graphs are characterized by rather simple forbidden induced subgraphs, like cycles, paths, and their complements.
For example, a graph is called chordal if and only if it has no cycles of size at least 4 as a vertex induced subgraph.
The proposed method needs to be given a ZDD representing forbidden induced subgraphs, which we assume to be computed by an existing method or some way.
Our technique consists of the following three steps, which involve FBS and family algebra over \emph{multivalued decision diagrams (MDD)}~\cite{KAM:1998}:
\begin{enumerate}
\item Enumerating forbidden induced subgraphs on a ZDD (in some way);
\item Adding edges induced by the forbidden induced subgraphs as an MDD by FBS;
\item Constructing a ZDD enumerating subgraphs that avoid forbidden subgraphs as induced ones by a novel operation of family algebra.
\end{enumerate}
The frist step depends on the target graph class but the last two steps do not.
This paper describes those two steps.

As a demonstration of our method, we enumerated chordal graphs by experiments.
We succeeded in enumerating all the 215,488,096,587 chordal graphs of size 10 as a ZDD.

\section{Preliminaries}

\subsection{Forbidden induced subgraphs}

Let $G = (V,E)$ be a graph with a vertex set $V$ and an edge set $E \subseteq \{\{u,v\} \mid u,v \in V\}$.
For any vertex subset $U \subseteq V$, $E[U]$ denotes the set of edges whose end points are both included in $U$, i.e., $E[U] = \{\,e \in E \mid e \subseteq U\,\}$, called {\em induced edges} ({\it by $U$}).
For any edge subset $D \subseteq E$, $\bigcup D$ denotes the set of the end points of each edge in $D$, i.e., $\bigcup D = \bigcup_{\{u,v\} \in D} \{u,v\}$, called {\em induced vertices} ({\em by $D$}).
We call $(U, E[U])$ the ({\it vertex}) {\em induced subgraph} ({\em by $U$}).
Let $G[D] = (\bigcup D, D)$, called the {\em edge induced subgraph} ({\em by $D$}).
This paper often identifies an edge induced subgraph $G' = (\bigcup D, D)$ and the edge set $D$.

Some graph classes are characterized by forbidden subgraphs.
We say that a graph class $\mcal{G}$ is \emph{FIS-characterized} by a graph class $\mcal{F}$
 if $\mcal{G}$ consists of graphs $G=(V,E)$ such that none of the vertex subsets of $V$ induces a graph belonging to $\mcal{F}$, i.e.,
\[
	(V,E) \in \mcal{G} \iff \forall U \subseteq V,\, (U,E[U]) \notin \mcal{F}
\,.\]
For example, the class of chordal graphs is FIS-characterized by the class of cycles of size at least 4.

\subsection{Multi-valued Decision Diagrams}
A \emph{$k$-colored subset} of a finite set $E$ is a $k$-tuple $\vec{D}=(D_1,\dots,D_k)$ of subsets $D_i \subseteq E$ such that $D_i \cap D_j = \emptyset$ for any distinct $i$ and $j$.
To represent and manipulate sets of $k$-colored subsets, we use \emph{$k$-valued decision diagrams ($k$-DDs)},
which are special types of \emph{multi-valued decision diagram} with lack of a reduction rule.

A $k$-DD over a finite set $E = \{e_1,\dots,e_m\}$ is a labeled rooted directed acyclic graph $\mbf{Z}=(N,A,\ell)$ with a node set $N$, an arc set $A$ and a labeling function $\ell$.
The node set $N$ has exactly one root node $\rho$ and exactly two terminal nodes $\bot$ and $\top$.
Each non-terminal node $\alpha \in N\setminus\{\top,\bot\}$ has a label $\ell(\alpha) \in \{1,\dots,m\}$ and has exactly $k+1$ outgoing arcs called $0$-arc, $1$-arc, \dots, and $k$-arc.
The node pointed at by the $j$-arc of $\alpha$ is called the $j$-child and denoted by $\alpha_j$ for each $j \in \{0,1,\ldots,k\}$.
It is satisfied that $\ell(\alpha_j) = \ell(\alpha) + 1$ if $\alpha_j$ is not a terminal.

Each path $\pi$ in a $k$-DD represents a $k$-colored subset $\sem{\pi} = (D_1,\dots,D_k)$ of $E$ defined by
\[
	D_j = \{\, e_{\ell(\beta)} \mid \text{$\pi$ includes the $j$-arc of $\beta$} \,\}\,,
\]
for $j \in \{1,\dots,k\}$.
The $k$-DD $\mbf{Z}$ itself represents a set of $k$-colored subsets
\[
\sem{\mbf{Z}} = \{\, \sem{\pi} \mid \pi \text{ is a path from the root $\rho$ to the terminal $\top$}\,\}
\,.\]

We call a $k$-DD \emph{reduced} if there are no distinct nodes $\alpha$ and $\beta$ such that $\ell(\alpha)=\ell(\beta)$ and $\alpha_j=\beta_j$ for all $j \in \{0,\dots,k\}$.
If a $k$-DD has nodes that violate this condition, those can be merged repeatedly until it becomes reduced.
This reduction does not change the semantics of the $k$-DD.

We remark that $k$-DDs, 2-DDs, and 3-DDs are almost identical to MDDs, binary decision diagrams as well as zero-suppressed binary decision diagrams, and ternary decision diagrams, respectively,
except a reduction rule that eliminates nodes so that the obtained data structure will be more compact. 
It is possible for our algorithm with slight modification to handle ``zero-suppress'' $k$-DDs, where a node can be eliminated if all the $j$-children for $1 \le j \le k$ point at the terminal $\bot$.
However, for simplicity, we have defined $k$-DDs without employing such a reduction rule, where the label of a child node is always bigger than the parent's by one.

\section{Proposed Algorithm}\label{sec:algorithm}
Suppose that a graph class $\mcal{G}$ is FIS-characterized by $\mcal{F}$.
In this section, we present an algorithm that enumerates all the subgraphs of an input graph $G$ belonging to $\mcal{G}$ provided that all the forbidden induced subgraphs of $G$ belonging to $\mcal{F}$ is also given as an input.
Hereafter we fix an input graph $G=(V,E)$ and restrict $\mcal{G}$ and $\mcal{F}$ to be the subgraphs of $G$.
Here by a subgraph of $G$ we mean a graph $G'=(V,D)$ for some $D \subseteq E$.
Therefore, we may identify a graph and its edge set.
That is, $\mcal{G}$ and $\mcal{F}$ are represented as sets of subsets of $E$.
The set $\mcal{F}$ is given as a 2-DD.

Recall the definition of FIS-characterization:
\[
	D \in \mcal{G} \iff \forall U \subseteq \bigcup D,\, D[U] \notin \mcal{F}
\,.\]
In other words, if a graph $D \in \mcal{G}$ contains a subgraph $D'$ belonging to $\mcal{F}$, then $D'$ must induce edges with which it does not belong to $\mcal{F}$.
Our proposed method can be divided into two phases.
In the first phase, as its details will be described in Section~\ref{sec:edge-induction}, we construct a 3-DD $\mbf{I}$ for the set of $2$-colored subsets $(F_1,F_2)$ of $E$ such that $F_1 \in \mcal{F}$ and $F_2 = E[\bigcup F_1]  \setminus F_1$.
Then we construct a 2-DD $\mbf{Z}$ for the set of edge sets $D$ such that 
\[
\forall (F_1,F_2) \in \sem{\mbf{I}},\, ( F_1 \subseteq D \implies F_2 \cap D \neq \emptyset )
\,,\]
as described in Section~\ref{sec:goal}.
The following lemma ensures that our method indeed gives the desired subgraphs.
\begin{lemma}\label{lem:correctness}
Let $D \subseteq E$. The following conditions are equivalent:
\begin{enumerate}
\item $\forall U \subseteq \bigcup D,\, D[U] \notin \mcal{F}$,
\item $\forall F \in \mcal{F},\, ( F \subseteq D \implies (E[\bigcup F]  \setminus F) \cap D \neq \emptyset )$.
\end{enumerate}
\end{lemma}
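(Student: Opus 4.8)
The plan is to prove the contrapositive in both directions, i.e., to establish that the \emph{negations} of the two conditions are equivalent. Negating condition~1 asserts the existence of a vertex set $U \subseteq \bigcup D$ with $D[U] \in \mcal{F}$, and negating condition~2 asserts the existence of a forbidden subgraph $F \in \mcal{F}$ with $F \subseteq D$ and $(E[\bigcup F] \setminus F) \cap D = \emptyset$. I would exhibit an explicit correspondence between these two kinds of witnesses: given a bad vertex set $U$, read off the forbidden subgraph $F := D[U]$; given a bad forbidden subgraph $F$, read off the vertex set $U := \bigcup F$.

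For the implication $(\neg 1) \Rightarrow (\neg 2)$, set $F = D[U]$. Then $F \in \mcal{F}$ by hypothesis and $F \subseteq D$ directly from the definition of induced edges. The only thing left to check is that $D$ meets no edge of $E[\bigcup F] \setminus F$, and here the key observation is the inclusion $\bigcup F = \bigcup D[U] \subseteq U$: any edge of $D$ contained in $\bigcup F$ is then contained in $U$, hence already belongs to $D[U] = F$, so it cannot sit in $E[\bigcup F] \setminus F$.

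For the converse $(\neg 2) \Rightarrow (\neg 1)$, set $U = \bigcup F$, which is a subset of $\bigcup D$ because $F \subseteq D$. The crux is to show the identity $D[U] = F$, from which $D[U] \in \mcal{F}$ is immediate. The inclusion $F \subseteq D[U]$ is routine, since every edge of $F$ lies in $D$ and inside $\bigcup F = U$. The reverse inclusion $D[U] \subseteq F$ is exactly where the hypothesis $(E[\bigcup F] \setminus F) \cap D = \emptyset$ enters: an edge $e \in D[U]$ lies in $E[\bigcup F]$, so if $e$ were not in $F$ it would be a common element of $E[\bigcup F]\setminus F$ and $D$, contradicting the hypothesis.

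I expect the main, and really the only, subtlety to be this bookkeeping between a vertex set $U$ and the induced vertex set $\bigcup F$, crystallised in the identity $D[\bigcup F] = F$ that holds precisely under condition~2; this identity is what formalises the intuitive claim that ``$F$ occurs as an induced subgraph of $D$.'' Everything else is elementary set manipulation, so no genuinely hard step is anticipated.
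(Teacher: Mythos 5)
Your proposal is correct and follows essentially the same route as the paper's own proof: both directions argue by contraposition with the witnesses $F = D[U]$ and $U = \bigcup F$, and both hinge on the identity $D[\bigcup F] = F$ (equivalently, the inclusion $E[\bigcup F]\setminus F \subseteq E \setminus D$) in the respective directions. No gaps.
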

\begin{proof}
($\Leftarrow$)
Suppose that $D$ does not satisfy the first condition.
There is $U \subseteq \bigcup D$ such that $(U,D[U]) \in \mcal{F}$.
Let $F = D[U]$. 
By definition $F \subseteq D$ and $E[\bigcup F] \setminus F \subseteq E[U] \setminus D[U] \subseteq E \setminus D$.
We have $E[\bigcup F] \setminus F \cap D = \emptyset$.

($\Rightarrow$)
Suppose that  $D$ does not satisfy the second condition.
There is $F \in \mcal{F}$ such that $F \subseteq D$ and $(E[\bigcup F]  \setminus F) \cap D = \emptyset$.
We will show that $U = \bigcup F$ disproves the first statement.
Since $F \subseteq D$, clearly $U \subseteq \bigcup D$.
It suffices to show that $F = D[U]$, which implies $D[U] \in \mcal{F}$.
Clearly 
\[
F = F[\bigcup F] \subseteq D[\bigcup F] =  D[U]
\,.\]
By assumption,
\[
D[U] \setminus F = (E [U] \cap D) \setminus F = (E[U] \setminus F) \cap D = \emptyset
\]
and thus $D[U] \subseteq F$.
\end{proof}

\subsection{Edge induction}\label{sec:edge-induction}
This subsection presents an algorithm that 
gives a 3-DD $\mbf{I}$ representing $\sem{\mbf{I}} =\{\,(F,\allowbreak{}E[\bigcup F]  \setminus F) \mid F \in \sem{\mbf{F}} \,\}$
from an input 2-DD $\mbf{F}$ representing a set $\sem{\mbf{F}}$ of edge sets.
That is, we ``color'' the edges induced by $\bigcup F$ with the second color for each $F \in \mcal{F}$.
Our algorithm can be seen as an instance of the so-called \emph{frontier-based search}, which is a generic framework for enumerating all the subgraphs with a specific property from an input graph.
Algorithm~\ref{alg:edge-induction} constructs a 3-DD in a top-down manner, where the initial 3-DD has only the root node $\rho_\mbf{I}$ with $\ell(\rho_\mbf{I})=1$.
By giving children to already constructed nodes, we expand the diagram.
Each node $\alpha$ of the diagram under construction has auxiliary information called \emph{configuration},
which is a pair $(n_\alpha,f_\alpha)$ of a node $n_\alpha$ of the input 2-DD $\mbf{F}$ and a map $f_\alpha$ from a subset $E^{\lrangle{ \ell(\alpha)}} = \bigcup E^{\ge  \ell(\alpha)} \cap \bigcup E^{<  \ell(\alpha)}$ of $V$ to $\{-1,0,1,2\}$, where $E^{\ge i} = \{e_i,\dots,e_m\}$ and $E^{< i} = \{e_1,\dots,e_{i-1}\}$.
No distinct nonterminal nodes have the same configuration.
The first component $n_\alpha$ satisfies the property that for any path $\pi$ from the root $\rho_\mbf{I}$ to $\alpha$ in $\mbf{I}$, there is a path $\theta$ from the root $\rho_\mbf{F}$ to $n_\alpha$ in $\mbf{F}$ such that $\sem{\pi} = (\sem{\theta},F_2)$ for some $F_2 \subseteq E$ (but not vice versa).
The default value of $f_\alpha$ is set to $f_\alpha(u)=0$ for all $u \in E^{\lrangle{i}}$.
If it has non-zero value, 
\begin{itemize}
	\item $f_\alpha(u)=-1$ means that $u$ must not occur in $\sem{\pi}$,
	\item $f_\alpha(u)=1$ means that there is no $v'$ such that $\{u,v'\}$ is colored 1 in $\sem{\pi'}$ but
	there must be $v$ such that $\{u,v\}$ is colored 1 in $\sem{\pi}$,
	\item $f_\alpha(u)=2$ means that there is $v'$ such that $\{u,v'\}\in E^{< i}$ is colored 1 in $\sem{\pi'}$,
\end{itemize}
 for  any path $\pi$ from the root $\rho_\mbf{I}$ to the terminal $\top$ passing through $\alpha$ and any path $\pi'$ from $\rho_\mbf{I}$ to $\alpha$.
The algorithm starts with the root node $\rho_\mbf{I}$ with configuration $(\rho_\mbf{F},\emptyset)$.
\begin{algorithm}[ht]
\caption{Inducing edges}
\label{alg:edge-induction}
	\SetKwInOut{Input}{input}\SetKwInOut{Output}{output}
	\SetKwFunction{Child}{Child}
	 \Input{a 2-DD $\mbf{F}$ (representing forbidden induced subgraphs)}
	 \Output{a 3-DD $\mbf{I}$ (coloring the edges induced by the forbidden graphs)}
	 \BlankLine
	let $N_1 \leftarrow \{(\rho_\mbf{F},\emptyset)\}$, $N_i \leftarrow \emptyset$ for $i=2,\ldots,m$ and $N_{m+1} \ot \{\top,\bot\}$\;
	\For{$i=1,\ldots,m$}{
		\For{each $\alpha \in N_i$}{
			\For{$j = 0,1,2$}{
				let $\alpha_j \ot \Child(\alpha,j)$\;
				\lIf{$\alpha_j \notin N_{i+1}$}{add a new node $\alpha_j$ with label $i+1$ to $N_{i+1}$}
				let $\alpha_j$ be the $j$-child of $\alpha$\;
			}
		}
	}
	\textbf{return} the 3-DD consisting of nodes of $N_1,\dots,N_{m+1}$\;
\end{algorithm}
\begin{algorithm}[ht]
\caption{$\texttt{Child}(\alpha,j)$}
\label{alg:edge-induction-child}
	\LinesNumbered
	\SetKwInOut{Input}{input}\SetKwInOut{Output}{output}
	\SetKwFunction{Child}{Child}
	 \Input{node $\alpha$ with configuration $(\beta,f)$ and a child number $j$}
	 \Output{configuration of the $j$-th child $\alpha_j$ of $\alpha$}
	 \BlankLine
	let $i \ot \ell(\alpha)$ and $\{u_1,u_2\} \ot e_i$\;
	\lIf{$j=1$}{let $n_{j} \ot \beta_1$}
	\lElse{let $n_{j} \ot \beta_0$}
	\lFor{all $v \in E^{\lrangle{i}}$}{let $f_{j}(v) \ot f(v)$}
	\lFor{all $v \in E^{\lrangle{i+1}} \setminus E^{\lrangle{i}}$}{let $f_{j}(v) \ot 0$}
	\If{$j=0$}{\label{algline:child00}
		\lIf{$n_0 = \bot$ or $f(u_1) f(u_2) \ge 1$}{let $\alpha_0 \ot \bot$}
		\lElseIf{$f(u_1) \ge 1$}{let $f_{0}(u_2) \ot -1$}
		\lElseIf{$f(u_2) \ge 1$}{let $f_{0}(u_1) \ot -1$}
	\label{algline:child01}}
	\Else{\label{algline:child10}
		\lIf{$n_j = \bot$ or $f(u_1) = -1$ or $f(u_2) = -1$}{let $\alpha_j \ot \bot$}
		\Else{
			\For{$k=1,2$}{
				\If{$f(u_k) = 0$}{
					\For{all $v \in E^{\lrangle{i}}$ such that $\{u_k,v\} \in E^{< i}$}{
						\lIf{$f(v) \ge 1$}{let $\alpha_j \ot \bot$}
						\lElse{let $f_j(v) \ot -1$}
					}
				}
			{let $f_j(u_k) \ot \max\{f(u_k),3-j\}$\;}\label{algline:child11}
			}
		}
	}
	\lIf{$u_k \notin \bigcup E^{\ge i+1}$ and $f_j(u_k) = 1$ for some $k \in \{1,2\}$}{let $\alpha_j \ot \bot$}\label{algline:child3}
	\lIf{$\alpha_j \neq \bot$}{let $\alpha_j \ot (n_j,f_j \upharpoonright E^{\lrangle{i+1}})$}\label{algline:child4}
	\textbf{return} $\alpha_j$\;
\end{algorithm}

Algorithm~\ref{alg:edge-induction-child} gives the configuration of the $j$-child of a node $\alpha \in N_i$, unless the child must be a terminal.
Let the configuration of $\alpha$ be $(\beta,f)$ and $e_i = \{u_1,u_2\}$.

Choosing the 0-arc of $\alpha$ means that we do not include the edge $e_i$ in a 2-colored graph under consideration.
Recall that if both $u_1$ and $u_2$ are used in a graph, then the edge $e_i$ must be colored 1 or 2.
Lines~\ref{algline:child00}--\ref{algline:child01} reflect this restriction.

Choosing the $j$-arc with $j \ge 1$ means that the edge $e_i$ is colored $j$ in the resultant 2-colored graph.
This case is handled on Lines~\ref{algline:child10}--\ref{algline:child11}.
This clearly contradicts $f(u_k)=-1$ for any of $k \in \{1,2\}$, which means that $u_k$ must not be used.
If $f(u_k)=0$, this means that so far all edges $\{u_k,v\} \in E^{<i}$ are colored 0, i.e., they do not occur in 2-colored subgraphs under consideration.
On the other hand, if $f(v) \ge 1$, this means that $v$ will occur together with $u$.
This contradicts that the edge $\{u_k,v\}$ remains colored 0.
If $e_i =\{u_1,u_2\}$ is colored $1$, this means that it is in a forbidden graph, so we let $f_j(u_k)=2$.
If $e_i =\{u_1,u_2\}$ is colored $2$, this means that it is induced by an vertex in a forbidden graph, so we let $f_j(u_k)=1$ unless $f_j(u_k) = 2$.

In addition, if $f(u_k)=1$ and $u_k \notin \bigcup E^{\ge i+1}$, this means that $u_k$ is supposed to have an edge colored with $1$ but we have decided not to  color any edges connecting $u_k$ with 1.  This restriction is checked on Line~\ref{algline:child3}.

Since we do not need to remember the values of $f_j(v)$ for $v \notin E^{\lrangle{i+1}}$ in the further computation, we restrict the domain of $f_j$ to be $E^{\lrangle{i+1}}$ on Line~\ref{algline:child4}.

\subsection{Enumeration of subgraphs with no forbidden induced subgraphs}\label{sec:goal}
We now give an operation that computes a 2-DD $\mbf{D}$ for 
\[
	\sem{\mbf{D}}=\chi(\sem{\mbf{I}})=\{\, D \subseteq E \mid  \forall (F_1,F_2) \in \sem{\mbf{I}},\, ( F_1 \subseteq D \implies F_2 \cap D \neq \emptyset ) \,\}
\]
from a 3-DD $\mbf{I}$.
Note that the domain of $\chi$ is 2-colored subsets of $E$ and the codomain is (1-colored) subsets of $E$.
When $\mbf{I}$ represents $\sem{\mbf{I}} = \{\,(F,E[\bigcup F]  \setminus F) \mid F \in \sem{\mbf{F}} \,\}$ for a set $\sem{\mbf{F}}$ of forbidden graphs,
 we obtain the FIS-characterized set, by Lemma~\ref{lem:correctness}.
We compute $\mbf{D}$ from a 3-DD ${\mbf{I}}$ in a bottom-up recursive manner.

Here we give a semantics of a node $\alpha$ of a $k$-DD by
\[
	\sem{\alpha} = \{\, \sem{\pi} \mid \text{$\pi$ is a path from $\alpha$ to $\top$}\,\}
\,.\]
Clearly $\sem{\mbf{D}} = \sem{\rho_\mbf{D}}$ for any $k$-DD $\mbf{D}$ and its root $\rho_\mbf{D}$.
For a set $\mcal{I}$ of 2-colored subsets of $E^{\ge i}$, define 
\[
	\chi_i(\mcal{I})=\{\, D \subseteq E^{\ge i} \mid  \forall (F_1,F_2) \in \mcal{I},\, ( F_1 \subseteq D \implies F_2 \cap D \neq \emptyset ) \,\}
\,.\]
According to this definition, the base of the recursion is given by 
\begin{itemize}
\item $\chi_{m+1}(\sem{\bot}) = \chi_{m+1}(\emptyset) = \{\emptyset\}$,
\item $\chi_{m+1}(\sem{\top}) = \chi_{m+1}(\{\emptyset,\emptyset\}) = \emptyset$.
\end{itemize}
For $i \le m$, it holds that
\begin{align}
\chi_i(\mcal{I}) &= \left( \chi_{i+1}(\mcal{I}_0) \cap \chi_{i+1}(\mcal{I}_2) \right)
\cup \left( e_i * \left( \chi_{i+1}(\mcal{I}_0) \cap \chi_{i+1}(\mcal{I}_1) \right) \right)\,,
\label{eq:recursion}
\end{align}
where
\begin{align*}
e * \mcal{D} &= \{\, \{e\}\cup D \mid D \in \mcal{D}\,\} \text{ for any family $\mcal{D}$ of (1-colored) subsets of $E$,}
\\
\mcal{I}_0 &= \{\, (F_1,F_2) \in \mcal{I} \mid e_i \notin F_1 \cup F_2\,\},
\\
\mcal{I}_1 &= \{\, (F_1 \setminus \{e_i\},F_2) \mid e_i \in F_1,\ (F_1,F_2) \in \mcal{I} \,\},
\\
\mcal{I}_2 &= \{\, (F_1,F_2 \setminus \{e_i\}) \mid e_i \in F_2,\ (F_1,F_2) \in \mcal{I} \,\}.
\end{align*}
That is, if $\mcal{I}=\sem{\alpha}$, then  $\mcal{I}_0=\sem{\alpha_0}$, $\mcal{I}_1=\sem{\alpha_1}$ and $\mcal{I}_2=\sem{\alpha_2}$.
If a 2-DD has a node $\beta$ with label $e_i$ such that $\sem{\beta} = \chi_i(\mcal{I})$, then $\sem{\beta_0} =  \chi_{i+1}(\mcal{I}_0) \cap \chi_{i+1}(\mcal{I}_2)$ and  $\sem{\beta_1} = \chi_{i+1}(\mcal{I}_0) \cap \chi_{i+1}(\mcal{I}_1) $.

Equation~(\ref{eq:recursion}) is justified by the following observation.
Let us partition $\chi_i(\mcal{I})$ into two depending on whether a set contains $e_i$, i.e., $\chi_i(\mcal{I}) = \mcal{D}_0 \cup (e_i * \mcal{D}_1)$ where no sets in $\mcal{D}_0 \cup \mcal{D}_1$ contain $e_i$.
Then by definition,
\begin{align*}
\mcal{D}_0 &= \{\, D \subseteq E^{\ge i} \mid e_i \notin D \wedge \forall (F_1,F_2) \in \mcal{I}, \left( F_1 \subseteq D \implies F_2 \cap D \neq \emptyset \right)\,\} 
\\
 &= \{\, D \subseteq E^{\ge i+1} \mid \forall (F_1,F_2) \in \mcal{I}_0,\, \left( F_1 \subseteq D \implies F_2 \cap D \neq \emptyset \right)
\\ & \text{\phantom{$= \{\, D \subseteq E^{\ge i+1} \mid$}}
\wedge  \forall (F_1,F_2) \in \mcal{I}_1 ,\, \left(\{e_i\} \cup F_1 \subseteq D \implies F_2 \cap D \neq \emptyset \right)
\\ & \text{\phantom{$= \{\, D \subseteq E^{\ge i+1} \mid$}}
\wedge  \forall (F_1,F_2) \in \mcal{I}_2 ,\, \left( F_1 \subseteq D \implies (\{e_i\} \cup F_2) \cap D \neq \emptyset \right)
 \,\}.
\end{align*}
For $D \subseteq E^{\ge i+1}$, the condition $\{e_i\} \cup F_1 \subseteq D $ can never be true. In addition, $(\{e_i\} \cup F_2) \cap D \neq \emptyset$ if and only if $F_2 \cap D \neq \emptyset$. Hence,
\begin{align*}
\mcal{D}_0 &= \{\, D \subseteq E^{\ge i+1} \mid \forall (F_1,F_2) \in \mcal{I}_0 \cup \mcal{I}_2 ,\, \left( F_1 \subseteq D \implies F_2 \cap D \neq \emptyset \right)
 \,\}
 \\ &= \chi_{i+1}(\mcal{I}_0) \cap \chi_{i+1}(\mcal{I}_2)\,.
\end{align*}
On the other hand,
\begin{align*}
\mcal{D}_1 &= \{\, D \subseteq E^{\ge i+1} \mid \forall (F_1,F_2) \in \mcal{I} ,\, \left(F_1 \subseteq \{e_i\} \cup D \implies F_2 \cap (\{e_i\} \cup D) \neq \emptyset \right)\,\} 
\\
 &= \{\, D \subseteq E^{\ge i+1} \mid \forall (F_1,F_2) \in \mcal{I}_0 ,\, \left( F_1 \subseteq \{e_i\} \cup D \implies F_2 \cap (\{e_i\} \cup D) \neq \emptyset \right)
\\ & \quad\quad
\wedge  \forall (F_1,F_2) \in \mcal{I}_1 ,\,\left(\{e_i\} \cup F_1 \subseteq \{e_i\} \cup D \implies F_2 \cap (\{e_i\} \cup D) \neq \emptyset \right)
\\ & \quad\quad
\wedge  \forall (F_1,F_2) \in \mcal{I}_2 ,\,\left( F_1 \subseteq \{e_i\} \cup D \implies (\{e_i\} \cup F_2) \cap (\{e_i\} \cup D) \neq \emptyset \right)
 \,\}.
\end{align*}
Obviously, the condition $(\{e_i\} \cup F_2) \cap (\{e_i\} \cup D) \neq \emptyset$ is always true.
Recall that $e_i \notin F_1 \cup F_2$ for $(F_1,F_2) \in \mcal{I}_0$
and that  $e_i \notin F_2$ for $(F_1,F_2) \in \mcal{I}_1$.
By simplifying the formula, we obtain
\begin{align*}
\mcal{D}_1 
 &= \{\, D \subseteq E^{\ge i+1} \mid \forall (F_1,F_2) \in \mcal{I}_0 ,\,\left( F_1 \subseteq D \implies F_2 \cap D \neq \emptyset \right)
\\ & \text{\phantom{$= \{\, D \subseteq E^{\ge i+1} \mid$}}
\wedge \forall (F_1,F_2) \in \mcal{I}_1 ,\,\left(F_1 \subseteq D \implies F_2 \cap D \neq \emptyset \right)
 \,\}
 \\ &= \chi_{i+1}(\mcal{I}_0) \cap \chi_{i+1}(\mcal{I}_1)
\,.\end{align*}

Algorithm~\ref{alg:goal} computes a 2-DD for $\chi(\sem{\mbf{I}})$ from (the root of) a 3-DD $\mbf{I}$ based on Equation~(\ref{eq:recursion}).
\begin{algorithm}[t]
\caption{\text{Computing a 2-DD for $\chi(\sem{\mbf{I}})$ from a 3-DD $\mbf{I}$}}
\label{alg:goal}
\SetKwInOut{Input}{input}\SetKwInOut{Output}{output}
	 \Input{node $\alpha$ of a 3-DD}
	 \Output{node $\beta$ of a 2-DD such that $\sem{\beta}=\chi(\sem{\alpha})$}
	\lIf{$\alpha = \bot$}{\textbf{return} $\top$}
	\lElseIf{$\alpha = \top$}{\textbf{return} $\bot$}
	\lElse{
	\textbf{return} a node with label $e_{\ell(\alpha)}$ whose 0-child represents ${\chi}(\sem{\alpha_0}) \cap {\chi}(\sem{\alpha_2})$ and 1-child represents ${\chi}(\sem{\alpha_0}) \cap {\chi}(\sem{\alpha_1})$}
\end{algorithm}

\section{Experiments}

In this section, we show experimental results of constructing 2-DDs and 3-DDs for chordal graphs
to confirm the performance of our algorithm. For a given graph $G$, the 2-DD for
all the cycles on $G$ can be constructed by conventional frontier-based search~\cite{Knuth11}.
The 2-DD for all the subgraphs of $G$ that have a specified number of edges can
be constructed by the method by Kawahara et al.~\cite{kawahara:2017}.
Since both methods can be easily combined~\cite{kawahara:2017}, we can obtain the 2-DD $\mbf{F}_{\mrm{cho}}$ representing
all the cycles with size at least four on $G$. By applying the algorithm
in Sec.~\ref{sec:edge-induction} with $\mbf{F} = \mbf{F}_{\mrm{cho}}$,
we obtain the 3-DD, say $\mbf{I}_{\mrm{cho}}$, and by applying the
algorithm in Sec.~\ref{sec:goal} with $\mbf{I} = \mbf{I}_{\mrm{cho}}$,
we have the 2-DD, say $\mbf{Z}_{\mrm{cho}}$, for $\chi(\sem{\mbf{I}_{\mrm{cho}}})$, which represents the set of all the chordal subgraphs of $G$.

To see the scalability and bottleneck of our algorithm, we run it for complete (vertex-labeled) graphs with $n$ vertices. Giving a complete graph as the input means that we obtain the set of all the chordal labeled (not necessarily connected) graphs with at most $n$ vertices as a 2-DD. We implemented our algorithm in the C++ language using the TdZdd library~\cite{iwashita2013efficient} for the construction of DDs in a top-down manner. Our implementation was complied by \texttt{g++} with the \texttt{-O3} optimization option and run on a machine with Intel Xeon E5-2630 (2.30GHz) CPU and 128GB memory (Linux Centos 7.4).

Table~\ref{tab:exp_time} shows the running time and memory usage of algorithms. ``Const.\ $X$ time'' in the table indicates the time (in seconds) for constructing the (2- or 3-) DD $X$.
``Mem 1'' shows the maximum memory usage (in MB) during constructing $\mbf{F}_{\mrm{cho}}$ and $\mbf{I}_{\mrm{cho}}$ (obtained by calling \texttt{getmaxrss()} function after their construction finishes).
``Mem 2'' shows the maximum memory usage during constructing $\mbf{Z}_{\mrm{cho}}$ measured by a program whose input is $\mbf{I}_{\mrm{cho}}$ (that is, the usage does not include that of ``Mem 1''). ``OOM'' means out of memory (i.e., the memory usage exceeds 128GB).
We can confirm that our algorithm spent most of the time constructing $\mbf{Z}_{\mrm{cho}}$.

\begin{table}[t]
\begin{center}\small
  \caption{Running time (sec.) and memory usage (MB) for complete graphs with $n$ vertices. }
  \begin{tabular}{|r|r|r|r|r|r|} \hline
        & Const. $\mbf{F}_{\mrm{cho}}$ & Const. $\mbf{I}_{\mrm{cho}}$ & Const. $\mbf{Z}_{\mrm{cho}}$ & & \\
    $n$ & time & time & time & Mem 1 & Mem 2 \\ \hline\hline
2 & 0.000 & 0.000 & 0.000 & 28 & 28 \\ \hline
3 & 0.000 & 0.000 & 0.000 & 28 & 28 \\ \hline
4 & 0.001 & 0.000 & 0.001 & 28 & 28 \\ \hline
5 & 0.001 & 0.001 & 0.001 & 28 & 28 \\ \hline
6 & 0.002 & 0.003 & 0.004 & 28 & 28 \\ \hline
7 & 0.005 & 0.011 & 0.030 & 29 & 28 \\ \hline
8 & 0.014 & 0.041 & 0.333 & 29 & 32 \\ \hline
9 & 0.040 & 0.123 & 14.664 & 32 & 508 \\ \hline
10 & 0.109 & 0.496 & 692.666 & 45 & 15738 \\ \hline
11 & 0.323 & 1.574 & OOM & 92 & OOM \\ \hline

  \end{tabular}
  \label{tab:exp_time}
  \end{center}
\end{table}

Table~\ref{tab:exp_nodes} shows the number of non-terminal nodes of $\mbf{F}_{\mrm{cho}}$, $\mbf{I}_{\mrm{cho}}$ and $\mbf{Z}_{\mrm{cho}}$ and that of graphs (i.e., the cardinality of the family of sets represented by DDs) in $\mbf{F}_{\mrm{cho}}$ and $\mbf{Z}_{\mrm{cho}}$.
Note that the cardinality of the family
represented by $\mbf{F}_{\mrm{cho}}$ is the same as $\mbf{I}_{\mrm{cho}}$.
The cardinality of a family represented by a DD can be easily computed by a simple dynamic programming-based algorithm~\cite{Knuth11} in time proportional to the number of nodes in the DD.
The numbers appearing in the column ``\# Chordal labeled Graphs'' coincide those in the sequence A058862 in OEIS~\cite{Sloane_theencyclopedia}.

\begin{table}[t]
\begin{center}\small
  \caption{Number of nodes of 2-DDs and 3-DDs and that of graphs represented by the 2-DDs. ``\# cycles'' means that the number of cycles with length at least four.}
  \begin{tabular}{|r|r|r|r|r|r|} \hline
        & Const. $\mbf{F}_{\mrm{cho}}$ & Const. $\mbf{I}_{\mrm{cho}}$ & Const. $\mbf{Z}_{\mrm{cho}}$ & & \# Chordal \\
    $n$ & \# node & \# node & \# node & \# cycles & labeled graphs \\ \hline\hline
2 & 0 & 1 & 1 & 0 & 2 \\ \hline
3 & 0 & 4 & 3 & 0 & 8 \\ \hline
4 & 12 & 23 & 17 & 3 & 61 \\ \hline
5 & 54 & 176 & 106 & 27 & 822 \\ \hline
6 & 202 & 921 & 849 & 177 & 18154 \\ \hline
7 & 717 & 4883 & 8768 & 1137 & 617675 \\ \hline
8 & 2483 & 21959 & 111520 & 7962 & 30888596 \\ \hline
9 & 8569 & 119624 & 1736915 & 62730 & 2192816760 \\ \hline
10 & 29884 & 498703 & 32470737 & 555894 & 215488096587 \\ \hline
11 & 105789 & 2324022 & OOM & 5487894 & OOM \\ \hline

  \end{tabular}
  \label{tab:exp_nodes}
  \end{center}
\end{table}


\if 0

\clearpage
\newpage

\section{Old Sec. 2.1}

Let $G = (V,E)$ be a graph with a vertex set $V$ and edge set $E \subseteq \{\{u,v\} \mid u,v \in V\}$, where $E = \{e_1,\ldots,e_m\}$.
For any vertex subset $U \subseteq V$, $E[U]$ denotes the {\it induced edges} each of whose end points are included in $U$: $E[U] := \{e \in E \mid e \subseteq U\}$.
In a similar manner, for any edge subset $X \subseteq E$, $V[X]$ denotes the {\it induced vertices} that is the set of end points of each edge in $X$: $V[X] := \bigcup_{\{u,v\} \in X} \{u,v\}$.
Additionally, $G[X]$ denotes the {\it edge induced subgraph}: $G[X] = (V[X], X)$.
\kawahara{$X$ denotes {\it the new technical term} that is hogehoge. という書き方は、
  関係代名詞の使い方が少し不思議に見えます。
  ``the new technical term that is hogehoge'' と書くと、
  ``that is not hogehoge'' な ``the new technical term'' もあるのかと思えてしまいます（制限用法）。
``$U$: $E[U] := \{e \in E \mid e \subseteq U\}$'' は $U$ の後ろのコロンなどが見にくいので変えました。
  In a similar manner とか Additionally とか、無くても意味が通じるので、余分かと思います。
  induced edges の定義に "of $U$" を付けました（他も同様）。
}

For any edge subset $X \subseteq E$, let $R_G[X]$ be the induced edges of $V[X]$ excluding $X$: $R_G[X] := E[V[X]] \setminus X$.
A set family $\mathcal{F} \subseteq 2^E$ is said to be a {\it forbidden induced subgraphs} of $G$, if edge set $E$ of $G$ satisfies the following property:
\begin{equation}
\forall X \subseteq E, X \in \mathcal{F} \Rightarrow R_G[X] \neq \emptyset.
\end{equation}
Some graph classes are defined by forbidden induced subgraphs.
For example, if  $G$ is said to be a {\it chordal graph}, $\mathcal{F}$ is equal to all cycles with four or more edges.
\suzuki{citation for chordal?}
\kawahara{forbidden induced subgraphs はグラフクラスが指定されないと意味の無い概念です。
forbidden induced subgraphs の定義が分かりにくかったので書き変えました（本文参照）。}

\marking{↓ここから未修正（たぶんこの内容はここには書かない）}

Given a set family $\mathcal{F}$, let $\mathcal{X}_G(\mathcal{F}) \subseteq 2^E$ be the set of all subgraphs of $G$ that $\mathcal{F}$ is forbidden induced subgraphs of $G[X]$ for all $X \in \mathcal{X}_G(\mathcal{F})$.
Our goal is to enumerate $\mathcal{X}_G(\mathcal{F})$.
However, the explicit enumeration take exponential amount of time since $|\mathcal{X}_G(\mathcal{F})|$ is exponential in $m$.
In contrast to explicit enumeration, we aim at implicitly enumerating $\mathcal{X}_G(\mathcal{F})$.


\section{Proposed Algorithm}
In this section, we present an algorithm for implicitly enumerating $\mathcal{X}_G$.
Our algorithm uses an expanded procedure of a {\it fronter-based search} (FBS)~\cite{kawahara:2017} that is a generic procedure for enumerating all subgraphs of a specified property together with the family algebra of ZDDs.
The algorithmic overview is as follows:
\begin{enumerate}
\item The algorithm first constructs a ZDD $\mathcal{Z}_C$ for all cycles with four or more edges in a given graph $G$. This process can be done by the algorithms in \cite{knuth:2009} and \cite{kawahara:2017}.
\item In reference to $\mathcal{Z}_C$, the algorithm constructs a TDD $\mathcal{T}_C$ for all $2$-colored subsets each of which consists of a cycle in $\mathcal{E}_{\mathcal{Z}_C}$ and its chords: $\mathcal{E}_{\mathcal{T}_C}(\top) = \{\bm{X} \in (2^E)^2 \mid X_1 \in \mathcal{E}_{\mathcal{Z}_C}(\top),~X_2 = R(X_1)$\}.
\item The algorithm constructs a ZDD $\mathcal{Z}$ for all chordal subgraphs: $\mathcal{E}_{\mathcal{Z}}(\top) = \mathcal{X}_G$. This process uses the above TDD $\mathcal{T}_C$ and a new family algebra of ZDDs refered to as \suzuki{\it SupersetTDDtoZBDD}.
\end{enumerate}
In the following, we first explain the method to construct $\mathcal{T}_C$.
Subsequently, we presents a new family algebra of ZDDs for obtaining $\mathcal{Z}$ from $\mathcal{T}_C$.

\subsection{Constructing TDD for Cycles and Their Chords}
Here, we describe an algorithm to construct $\mathcal{T}_C$ in reference to $\mathcal{Z}_C$.
\suzuki{
We first describe the framework of this algorithm.
Subsequently, we design its four main components {\it configuration}, {\it $\bot$-prune}, {\it $\top$-prune}, and {\it generateNode}.

\subsubsection*{Framework}
The algorithm constructs the node set $N_i := \{\alpha \mid \ell(\alpha) = i\}$ for $i = 1,\ldots,m$, and the $b$-arc set $A_b := \{(\alpha, \alpha_x) \mid \exists i \in \{1,\ldots,m\},~\alpha \in N_i\}$ for $b = 0,1,2$.
The unprocessed edges at $i$-th step is denoted by $E^{\geq i} = \{e_i,\ldots,e_m\}$.

Each node $\alpha \in N_i$ is associated with a problem $P_\alpha=\langle G[E^{\geq i}], C_\alpha\rangle$ for finding all $2$-colored subsets of $E^{\geq i}$ with a constraint function $C_\alpha\colon (2^{E^{\geq i}})^2 \rightarrow \{0,1\}$ that is defined as:
\begin{align}
C_\alpha(\bm{X}) :=
\begin{cases}
1 & \forall \bm{Y} \in \mathcal{E}_{\mathcal{T}_C}(\alpha),~X_1 \cup Y_1 \in \mathcal{E}_{\mathcal{Z}_C}(\top),~X_2 \cup Y_2 = R(X_1 \cup Y_1)\\
0 & \text{otherwise.}
\end{cases}
\end{align}
The solution of $P_\alpha$ is a $2$-colored set family of $E^{\geq i}$ defined as:
\begin{align}
\mathcal{X}(P_\alpha) = \{\bm{X} \in (2^{E^{\geq i}})^2 \mid C_\alpha(\bm{X}) = 1\}
\end{align}
For any pair of nodes $\beta,\beta^\prime \in N_i$, $\beta$ and $\beta^\prime$ are {\it equivalent} if $C_\beta(\bm{X}) = C_{\beta^\prime}(\bm{X})$ for any $\bm{X} \in (2^{E^{\geq i}})^2$.
The algorithm merges some equivalent nodes into one nodes.

The main process of the algorithm is as follows.
Initially, the algorithm generates the nodes set $N_1 = \{\rho\}$.
At the $i$-th step, the algorithm constructs $N_{i+1}$ using $N_i$ as follows.
For each node $\alpha \in N_i$, the algorithm generates its children;
$\mathcal{E}_{\mathcal{T}_C}(\alpha_0)$, $\mathcal{E}_{\mathcal{T}_C}(\alpha_1)$, and $\mathcal{E}_{\mathcal{T}_C}(\alpha_2)$ represent $2$-colored set families such that $e_i$ is excluded, included as a part of cycles, and included as a chord, respectively.
Subsequently, the algorithm conducts the following procedures to reduce the number of nodes:
\begin{itemize}
\item {\it $\bot$-pruning} and {\it $\top$-pruning}: Let $\bot$-prune$(\alpha,e_i,b)$ and $\top$-prune$(\alpha,e_i,b)$ be the functions defined as follows:
\begin{eqnarray}
\text{$\bot$-prune$(\alpha,e_i,b)$} &:=&
\begin{cases}
\textit{True} & \mathcal{X}(P_\alpha) = \emptyset\\
\textit{False} & \text{otherwise,}
\end{cases}\\
\text{$\top$-prune$(\alpha,e_i,b)$} &:=&
\begin{cases}
\textit{True} & \mathcal{X}(P_\alpha) = \{(\emptyset, \emptyset)\}\\
\textit{False} & \text{otherwise.}
\end{cases}
\end{eqnarray}
If $\bot$-prune$(\alpha,e_i,b)$ (resp. $\top$-prune$(\alpha,e_i,b)$) outputs \textit{True}, the algorithm adds the $b$-arc $(\alpha, \bot)$ (resp. $(\alpha, \top)$) to $A_b$.
\item {\it merging}: Let $\beta$ be a child of $\alpha$. If $\beta$ and a node $\beta^\prime \in N_i$ is equivalent, the algorithm sets $\beta^\prime$ to $\beta$.
\end{itemize}
To apply these procedures efficiently, each node $\beta$ maintains a data $\phi(\beta)$, referred to as {\it configuration}, which satisfies the condition that: $\phi(\beta) = \phi(\beta^\prime)$ then $\beta$ and $\beta^\prime$ are equivalent.
Note that the inverse is not required, which causes redundant node expansions.

The above framework is shown in Algorithm \ref{alg:ternary-fbs}.
The function generateNode$(\alpha,e_i,b)$ generates the $b$-child of $\alpha$.
To construct $\mathcal{T}_C$, we only have to design the configuration and three functions.

\subsubsection*{Configuration}
Let $N_C$ be the node set of $\mathcal{Z}_C$.

\subsubsection*{$\bot\text{-prune}(\alpha, e_i, b)$ and $\top\text{-prune}(\alpha, e_i, b)$}

\subsubsection*{generateNode$(\alpha, e_i, b)$}

}

\fi

\bibliographystyle{plain}

\end{document}